\pgfplotsset{compat=1.16}
\begin{document}

\title{Proving Regulatory Compliance: \\ Full Compliance Against an Expressive Unconditional Obligation is coNP-Complete}

\author{
  Silvano Colombo Tosatto$^1$\\
  \texttt{silvano.colombotosatto{@}data61.csiro.au}
  \and
  Guido Governatori\\
  \texttt{guido{@}governatori.net}
  \and
  Nick van Beest$^1$\\
  \texttt{nick.vanbeest@data61.csiro.au}
}
\date{%
    $^1$CSIRO Data61\\
    41 Boggo Rd, Dutton Park QLD 4102, Brisbane, Australia\\
    \vspace{1em}
    \today
}



\maketitle

\begin{abstract}
Organisations are required to show that their procedures and processes satisfy the relevant regulatory requirements. 
The computational complexity of proving regulatory compliance is known to be generally hard. 
However, for some of its simpler variants the computational complexity is still unknown. We focus on the eight variants of the problem that can be identified by the following binary properties: whether the requirements consists of one or multiple obligations, whether the obligations are conditional or always in force, and whether only propositional literals or formulae can be used to describe the obligations. This paper in particular shows that proving full compliance of a model against a single unconditional obligation whose elements can be described using formulae is co\textbf{NP}-complete.
Finally we show how this result allows to fully map the computational complexity of these variants for proving full and non compliance, while for partial compliance the complexity result of one of the variants is still missing.
\end{abstract}


\section{Introduction}\label{Ch_4.3_Sect:1}

The problem of proving compliance of process models consists of verifying whether the possible executions of these models satisfy some given requirements. Organisations are generally required to show that their procedures, which can be represented as process models, satisfy the regulatory requirements. The number of regulations and rules that the organisation needs to comply with is generally large \cite{morgan2002business}, and non-compliance may lead to significant fines and law suits \cite{zhang2007economic}.

Although the computational complexity of the general problem of proving regulatory compliance is known to be hard, the computational complexity of some of its variants is still unknown, and in this paper we provide the computational complexity for the following variant:

\begin{enumerate} 
\item Proving whether all executions of a process model satisfy a single requirement holding for the whole duration of the execution, and allowing the requirement to be described using propositional formulae.
\end{enumerate}

Considering the other variants of the problem that can be identified by the three following binary properties: \emph{a)} the number of obligations describing the regulatory requirements, \emph{b)} whether these requirements are conditional, and \emph{c)} whether these requirements conditions are restricted to be represented as single propositions, or can be represented using formulae. We position the new computational complexity result among the existing ones and show that we can now determine the computational complexity of all the variants when it comes to evaluating \emph{full} and \emph{non} compliance.\footnote{Whether all executions of a model, or none respectively, satisfy the given requirements.}

The remainder of the paper is structured as follows: Section~\ref{sec:preliminaries} introduces the formal concepts around business processes and business process models, together with the regulatory framework they are subject to. 
Section~\ref{sec:1g+} provides the new computational complexity result, which is placed in context
with existing results in Section~\ref{sec:overview}. 
Finally, Section~\ref{sec:summary} concludes the paper.


\section{Preliminaries}\label{sec:preliminaries}

\subsection{Acyclic Structured Process Models}

We first introduce process models, which represent a collection of possible ways in which an organisation can achieve a given objective. Verifying these models against some regulations corresponds to show that the ways the organisation achieves these objectives follow the regulations. 

We limit our study to problems dealing with \emph{acyclic structured} process models, similar to the structured workflows defined by Kiepuszewski et al. \cite{Kiepuszewski:2000:SWM:646088.679917}, which can be preliminarily defined as the models whose components are properly nested. 
Additionally, we consider only acyclic process models, meaning that these models do not contain loops and every task is limited to be executed at most once in each of the model's executions. 

Business processes can be modelled using different notations, which mostly use an intuitive graphical representation. 
Petri nets are a well-known tool for modelling concurrent processes, for which a rich body of theory and tools to verify their properties have been defined. 
In the context of this work, we use a specific sub-class called \emph{workflow nets} (WF nets), for which we define the formal framework below.

\subsubsection{Petri Nets}
A Petri net is a directed bipartite graph consisting of places, transitions, and arcs between transition-place pairs. 
The activities in a business process are represented as transitions in a Petri net, while the execution states are represented as combinations of places, which are token containers. A Petri net is formally defined as follows~\cite{Petri1966,Reisig1998}:

\begin{definition}[Petri net]\label{def:petrinet}
A \emph{Petri net} $\petri$ is a tuple $(\places,\transitions,\arcs, \labels)$, where:
\begin{compactitem}
  \item $\places$ is a finite set of \emph{places},
  \item $\transitions$ is a finite set of \emph{transitions}, such that $\places \cap \transitions =\emptyset$,
  \item $\arcs \subseteq (\places \times \transitions) \cup (\transitions \times \places)$ is a set of arcs, and
  \item $\labels: \places \cup \transitions \to \mathcal{L}$ is a labelling function. 
\end{compactitem}
\label{def:pn}
\end{definition}

Each place of a Petri net can contain a \emph{token}, and the \emph{net marking} (Definition \ref{def:netmarking}) of a Petri net is determined by which of its places contain tokens and which does not. A transition in a Petri net represents an operation towards the goal achieved by executing the Petri net. For instance, in a Petri net representing a cooking recipe, its transitions can correspond to the preparation steps required to prepare the recipe.

Depending on the net marking of a Petri net, some of its transitions can be executed, and their execution manipulate the tokens in the Petri net's places, altering the net marking. These correspond to the pre-set and post-set of a transition as described in Definition \ref{def:pre-post}.

\begin{definition}[Pre/Post-sets]\label{def:pre-post}
Given a Petri net $\petri = (\places,\transitions,\arcs, \labels)$, for each node $y \in \places \cup \transitions$:
\begin{itemize}
\item The \emph{pre-set} of $y$, denoted by $\pre{y}$, is the set of all input nodes of $y$: $\pre{y} = \{x \in \places \cup \transitions \ \mid \ (x,y) \in \arcs\}$.
\item The \emph{post-set} of $y$, denoted by $\post{y}$, is the set of all output nodes of $y$: $ \post{y}= \{x \in \places \cup \transitions \ \mid \ (y,x) \in \arcs\}$.
\end{itemize}
\end{definition}

\begin{definition}[Petri net state, net marking]\label{def:netmarking}
A \emph{net marking} $\markings$ of a Petri net $\petri = (\places,\transitions,\arcs, \labels)$ is a function that associates places with numbers, i.e., $\markings: \places \to \mathbb{N}_0$.
\end{definition}


\begin{definition}[Marked net]
A \emph{marked net} is a tuple $\petri= (\places,\transitions,\arcs, \labels,\markings_0)$, where $(\places,\transitions,\arcs, \labels)$ is a Petri net and $\markings_0$ is the initial marking.
\end{definition}

\begin{definition}[Enabled transition]
Given a net marking $\markings$, transition $t \in T$ is said to be \emph{enabled} if $\forall p \in \pre{t} : \markings(p) > 0$.
\end{definition}

%

\begin{definition}[Firing rule]
Given a marked net $\petri = (\places,\transitions,\arcs, \labels,\markings_0)$ and an enabled transition $t$, the \emph{firing} of $t$ is denoted by $\markings\larrow{t}\markings^\prime$ and leads to a new marking $\markings^\prime$ such that:
\begin{itemize}
\item $\forall p \in \pre{t}$, $\markings^\prime(p) = \markings(p) + 1$,
\item $\forall p \in \post{t}$, $\markings^\prime(p) = \markings(p) - 1$.
\end{itemize}
\end{definition}

Petri nets can be restricted with some extra properties, so that it resembles a business process having a single source place and a single sink place (i.e. a fixed start and end of the process). These special Petri nets are referred to as \emph{workflow nets (WF-nets for short)} and can be formally defined as follows:

\begin{definition}[WF-net]
A marked net $\petri = (\places,\transitions,\arcs,\labels,\markings_0)$ is a WF-net iff:
\begin{itemize}
  \item $\petri$ has a special source place $i$, such that $\pre{i} = \emptyset$ and $\forall p \in \places \setminus i$, $\pre{p} \neq \emptyset$. 
  \item $\petri$ has a special sink place $o$, such that $\post{o} = \emptyset$ and $\forall p \in \places \setminus o$, $\post{p} \neq \emptyset$. 
  \item if we add a transition $t_w$ to $\petri$ such that $\pre{t_w} = o$ and $\post{t_w} = i$, then $\petri$ is strongly connected.
\end{itemize}
\end{definition}

An execution of a WF-net is any sequence of transitions where the net marking resulting from a transition in the sequence is in the \emph{pre-set} of the next transition in the sequence.


\begin{definition}[Execution]\label{def:fullexe}
Given a WF-net $\petri = (\places,\transitions,\arcs,\labels,\markings_0)$, an execution $\exe$ of $\petri$ is a sequence of transitions $t_0, \dots, t_n$ such that:
\begin{itemize}
\item $\forall t_i \in \exe, t_i \in \transitions$,
\item $\pre{t_0} = \{i\}$,
\item $\post{t_n} = \{o\}$,
\item $\markings_0\larrow{t_0}\markings_1\larrow{t_1}\dots\larrow{t_n}\markings_n$, with no enabled transitions in $\markings_n$.
\end{itemize}
\end{definition}

\subsubsection{Structuredness}

Structured WF-nets can be characterised by their ability to be decomposed into so-called \emph{single-entry-single-exit (SESE)} regions~\cite{polyvyanyy2012structuring}, which we will refer to as \emph{process components} throughout the remainder of this paper.
A process component is a subset of nodes of a WF-net such that the subgraph induced by these nodes has a single entry node and a single exit node. 
These two nodes are called \emph{boundary} nodes as they connect the component with the rest of the model.

\begin{definition}[Boundary nodes]
Given a process component $r$, the single entry node and single exit node of $r$ are referred to as \emph{boundary nodes}.
We use the shorthand $\triangleright r$ to denote the entry node of $r$ and $r \triangleright$ to denote the exit node of $r$.
\end{definition}

\begin{definition}[Canonical process component]
A process component is \emph{canonical}, if it does not share any of its nodes with another process component. 
That is, any two canonical process components are either disjoint or one is contained in the other. 
\end{definition}

We adapt the definitions provided in~\cite{polyvyanyy2012structuring}, to further define the types of process components specific to WF-nets:


\begin{definition}[Trivial, Polygon, Bond, Rigid]
Let $C$ be a process component of a WF-net $\petri$, with boundary nodes $\triangleright C$, $C\triangleright$ $\in \places \cup \transitions$. 
\begin{itemize}
  \item $C$ is a \emph{trivial} component, iff $C$ is singleton, i.e., $C$ contains a single node such that $\triangleright C = C\triangleright$.
  \item $C$ is a \emph{polygon} component, iff there exists a sequence $(r_0,\ldots , r_n)$, $n \in \mathbb{N}$, of canonical components of $\petri$, 
  such that: 
  \begin{itemize}
    \item $C = \triangleright C$ $\cup$ $C\triangleright$ $\bigcup^{i=n}_{i=0} r_i$
    \item $(\triangleright C, \triangleright r_0), (r_n \triangleright, C\triangleright) \in \arcs$
    \item $(r_j\triangleright, \triangleright r_{j+1}) \in \arcs, 0 \leq j < n$
  \end{itemize} 
  \item C is a \emph{bond} component, iff there exists a set $R$ of canonical components of $\petri$, such that:
  \begin{itemize}
    \item $C = \triangleright C$ $\cup$ $C\triangleright$ $\bigcup_{r \in R} r$
    \item $\forall r \in R: (\triangleright C, \triangleright r), (r \triangleright, C\triangleright) \in \arcs$ 
	\item $x, y \in \places \vee x, y \in \transitions$
  \end{itemize}
  \item $C$ is a rigid component, iff $C$ is neither a trivial, nor a polygon, nor a bond component.
\end{itemize}
\end{definition}

\begin{definition}[Well-structured WF-net]
A WF-net $\petri$ is \emph{(well-)structured}, iff its set of all its canonical process components contains no rigid process component; otherwise the process model is \emph{unstructured}.
\end{definition}

\subsubsection{States and Traces}

Transitions in a WF-net can be annotated with the set of propositions that hold or not after its execution. As such, each transition can be annotated, resulting in a so-called \emph{annotated WF-net}.

\begin{definition}[Annotation]\label{def:annotation}
Given a WF-net $\petri = (\places,\transitions,\arcs,\labels,\markings_0)$, the annotation function $\annotations: \transitions \rightarrow \outcomes^2$ where $\outcomes$ is the propositional literals language, representing the set of propositions holding true after the execution of a transition.
\end{definition}

\begin{definition}[Annotated WF-net]\label{def:annotated}
Given a WF-net $\petri = (\places,\transitions,\arcs,\labels,\markings_0)$ and an annotation function $\annotations$. We refer to an annotated WF-Net as $\petri = (\places,\transitions,\arcs,\labels,\markings_0, \annotations)$.
\end{definition}

An annotation function describes what holds true after a transition is executed. After the execution of a sequence of transitions, we can compose the collection of annotations that hold true up to that point by performing a \emph{state update} after execution of each transition, as defined formally in Definition~\ref{def:litupd} below.

\begin{definition}[State update]\label{def:litupd}
Given two consistent sets of literals $\ltrls_1$ and $\ltrls_2$, representing the process state and the annotation of a transition being executed, the
  update of $\ltrls_1$ with $\ltrls_2$, denoted by $\ltrls_1\update \ltrls_2$ is a set of literals defined as
  follows:
$$ \ltrls_1\update \ltrls_2 = \ltrls_1\setminus\set{\neg{\ltrl}\mid
  \ltrl\in \ltrls_2}\cup \ltrls_2
$$
\end{definition}

We use the term \emph{trace} to refer to an execution of a WF-Net along with the associated set of literals that hold after the execution of each transition.

\begin{definition}[Trace]\label{def:trace}
Given an annotated WF-Net $\petri = (\places,\transitions,\arcs,\labels,\markings_0)$ and an execution $\exe = \seq{t_1, \ldots, t_n}$ of $\petri$, the corresponding trace is $\trace = \seq{\ltrls_1, \ldots, \ltrls_n}$, where each $\ltrls_i$ is a set of propositions and is computed as follows:

\begin{enumerate}
\item $\ltrls_0 = \emptyset$
\item $\ltrls_{i+1} = \ltrls_i\update\kwd{ann}(t_{i+1})$, for $1\le i < n$.
\end{enumerate}

Each $\ltrls_i$ in a trace represents the execution's state of the annotated WF-Net holding after executing the transition $t_i$ in the corresponding execution.
\end{definition}

In Example~\ref{ex:apm}, we illustrate an annotated WF-Net along with its possible executions and traces. In the remainder of this paper, we refer to annotated WF-nets simply as process models.


\begin{example}[Annotated Process Model]\label{ex:apm} Fig. \ref{f:p03} shows a structured process containing four tasks labelled $t_1, t_2, t_3$ and $ t_4$ and their annotations. 
The dashed lines mark the edges of the different process components, indicating polygon and bond components by \kwd{P} and \kwd{B}, respectively.
There are two bond components, of which the outer is an \kwd{AND} block with two concurrent paths. The top path is another bond component representing an \kwd{XOR} block, whereas the bottom path is a polygon representing a single task. The annotations indicate what has to hold after a task is executed. If $t_1$ is executed, then the literal $a$ has to hold in that state of the process. Table \ref{tab:01} shows the possible executions ($\Exe{B}$) and traces ($\Theta(B, \kwd{ann})$) of the model.

\end{example}

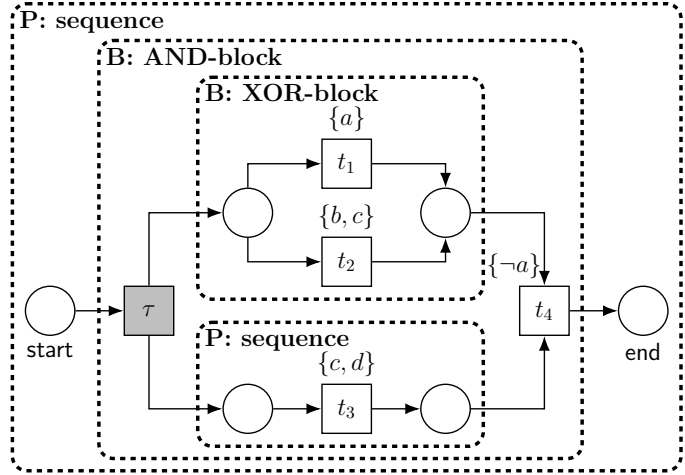
\begin{figure}[h!]
\centering
\scalebox{0.65}{\begin{tikzpicture}[thick,font=\Large]

\node[draw,align=center,circle,minimum size=1cm] (p1) at (1.0,-4.5) {};
\node[align=center,anchor=north] at (p1.south) {\kwd{start}};
\node[draw,align=center,circle,minimum size=1cm] (p2) at (5.0,-2.5) {};
\node[draw,align=center,circle,minimum size=1cm] (p3) at (5.0,-6.5) {};
\node[draw,align=center,circle,minimum size=1cm] (p4) at (9.0,-6.5) {};
\node[draw,align=center,circle,minimum size=1cm] (p5) at (9.0,-2.5) {};
\node[draw,align=center,circle,minimum size=1cm] (p6) at (13.0,-4.5) {};
\node[align=center,anchor=north] at (p6.south) {\kwd{end}};

\node[draw,fill=gray!50,align=center,minimum size=1cm] (t0) at (3.0,-4.5) {};
\node[align=center] at (t0.center) {$\tau$}; 
\node[draw,align=center,minimum size=1cm] (t1) at (7.0,-1.5) {};
\node[align=center] at (t1.center) {$t_{1}$};
\node[draw,align=center,minimum size=1cm] (t2) at (7.0,-3.5) {};
\node[align=center] at (t2.center) {$t_{2}$};
\node[draw,align=center,minimum size=1cm] (t3) at (7.0,-6.5) {};
\node[align=center] at (t3.center) {$t_{3}$};
\node[draw,align=center,minimum size=1cm] (t4) at (11.0,-4.5) {};
\node[align=center] at (t4.center) {$t_{4}$};

\draw[-{Latex[length=3mm]}] (p4) -- (11.0,-6.5) -- (t4);
\draw[-{Latex[length=3mm]}] (p1) -- (t0);
\draw[-{Latex[length=3mm]}] (p5) -- (11.0,-2.5) -- (t4);
\draw[-{Latex[length=3mm]}] (t0) -- (3.0,-2.5) -- (p2);
\draw[-{Latex[length=3mm]}] (t0) -- (3.0,-6.5) -- (p3);
\draw[-{Latex[length=3mm]}] (t4) -- (p6);
\draw[-{Latex[length=3mm]}] (p2) -- (5.0,-1.5) -- (t1);
\draw[-{Latex[length=3mm]}] (p2) -- (5.0,-3.5) -- (t2);
\draw[-{Latex[length=3mm]}] (p3) -- (t3);
\draw[-{Latex[length=3mm]}] (t3) -- (p4);
\draw[-{Latex[length=3mm]}] (t2) -- (9.0,-3.5) -- (p5);
\draw[-{Latex[length=3mm]}] (t1) -- (9.0,-1.5) -- (p5);

\node[anchor=south] at (t1.north) {\{$a$\}};
\node[anchor=south] at (t2.north) {\{$b,c$\}};
\node[anchor=south] at (t3.north) {\{$c,d$\}};
\node[xshift=-1mm,anchor=south] at (t4.north west) {\{$\neg a$\}};

\draw[dashed,line width=2pt,rounded corners=3mm] ([xshift=-0.25cm,yshift=2.75cm]p1.west |- t1.north) -- 
                                                 ([xshift=0.25cm,yshift=2.75cm]p6.east |- t1.north) -- 
                                                 ([xshift=0.25cm,yshift=-0.75cm]p6.east |- t3.south) -- 
                                                 ([xshift=-0.25cm,yshift=-0.75cm]p1.west |- t3.south) -- 
                                                 cycle;
\node[anchor=north west] at ([xshift=-0.25cm,yshift=2.75cm]p1.west |- t1.north) {\bf P: sequence};

\draw[dashed,line width=2pt,rounded corners=3mm] ([xshift=-0.5cm,yshift=2cm]t0.west |- t1.north) -- 
                                                 ([xshift=0.25cm,yshift=2cm]t4.east |- t1.north) -- 
                                                 ([xshift=0.25cm,yshift=-0.5cm]t4.east |- t3.south) -- 
                                                 ([xshift=-0.5cm,yshift=-0.5cm]t0.west |- t3.south) -- 
                                                 cycle;
\node[anchor=north west] at ([xshift=-0.5cm,yshift=2cm]t0.west |- t1.north) {\bf B: AND-block};

\draw[dashed,line width=2pt,rounded corners=3mm] ([xshift=-0.5cm,yshift=1.25cm]p2.west |- t1.north) -- 
                                                 ([xshift=0.25cm,yshift=1.25cm]p5.east |- t1.north) -- 
                                                 ([xshift=0.25cm,yshift=-0.25cm]p5.east |- t2.south) -- 
                                                 ([xshift=-0.5cm,yshift=-0.25cm]p2.west |- t2.south) -- 
                                                 cycle;
\node[anchor=north west] at ([xshift=-0.5cm,yshift=1.25cm]p2.west |- t1.north) {\bf B: XOR-block};

\draw[dashed,line width=2pt,rounded corners=3mm] ([xshift=-0.5cm,yshift=1.25cm]p3.west |- t3.north) -- 
                                                 ([xshift=0.25cm,yshift=1.25cm]p4.east |- t3.north) -- 
                                                 ([xshift=0.25cm,yshift=-0.25cm]p4.east |- t3.south) -- 
                                                 ([xshift=-0.5cm,yshift=-0.25cm]p3.west |- t3.south) -- 
                                                 cycle;
\node[anchor=north west] at ([xshift=-0.5cm,yshift=1.25cm]p3.west |- t3.north) {\bf P: sequence};
                                                                                                                                     
\end{tikzpicture}}
\caption{An annotated process}
\label{f:p03}
\end{figure}

\begin{table*}[h!]
\center
{\scriptsize
\begin{tabular}{lcl}
\multicolumn{1}{c}{$\Exe{B}$}   &\vline& \multicolumn{1}{c}{$\Theta(B, \kwd{ann})$} \\\hline
$(\kwd{start},t_1, t_3, t_4,\kwd{end})$ &\vline & $((\kwd{start}, \emptyset), (t_1, \{a\}), (t_3, \{a,c,d\}), (t_4, \{\neg a,c,d\}), (\kwd{end}, \{\neg a,c,d\}))$\\
$(\kwd{start},t_2, t_3, t_4\kwd{end})$ &\vline &  $((\kwd{start}, \emptyset), (t_2, \{b,c\}), (t_3,\{b,c,d\}), (t_4,\{\neg a,b,c,d\}), (\kwd{end},\{\neg a,b,c,d\}))$ \\
$(\kwd{start},t_3, t_1, t_4\kwd{end})$ &\vline & $((\kwd{start}, \emptyset), (t_3, \{c,d\}), (t_1, \{a,c,d\}), (t_4,\{\neg a,c,d\}), (\kwd{end},\{\neg a,c,d\}))$ \\
$(\kwd{start},t_3, t_2, t_4\kwd{end})$ &\vline & $((\kwd{start}, \emptyset), (t_3, \{c,d\}), (t_2, \{b,c,d\}), (t_4.\{\neg a,b,c,d\}), (\kwd{end},\{\neg a,b,c,d\}))$
\end{tabular}
}
\caption{Executions and traces of the annotated process in Fig. \ref{f:p03}.}\label{tab:01}
\end{table*}

\subsection{Regulatory Framework}\label{s:regulatory}
The regulatory framework describes the requirements that a process model must follow to be considered compliant. Depending on whether all, some, or none of the the executions satisfy these requirements, a different compliance level is determined for the process model. 
The regulatory requirements in a regulatory framework are described using obligations. These obligations are specified using a fragment of Process Compliance Logic (PCL), introduced by Governatori and Rotolo~\cite{PCL}.

\subsubsection{Semantics}

Obligations are evaluated over traces of a process model. We refer to the \emph{sub-segments} of a trace where an obligation is evaluated as \emph{in force intervals}, and depending on the obligation's type these in force intervals are evaluated as follows: \emph{achievement} obligations require that a state within each in force interval to satisfy the obligation's requirement, and \emph{maintenance} obligations require that each state in each interval satisfies the requirement.

\begin{definition}[Obligations]\label{def:obligations}
An obligation is a tuple $\Obl^{o}\langle \oblr, \oblt, \obld\rangle$, where $o$ is its type, and $\oblr, \oblt, \obld$ are propositional formulae representing respectively the obligation's requirement, trigger and deadline.

Given a trace $\trace = \seq{\ltrls_1, \ldots, \ltrls_n}$, $\Obl^{o}\langle \oblr, \oblt, \obld\rangle$ is either satisfied or violated as follows, depending on its type:

\begin{description}
\item[achievement]:
\begin{description}
\item[satisfied] $\forall \ltrls_i \in \trace | \ltrls_i \models \oblt$:
\begin{itemize}
\item $\exists \ltrls_j \in \trace | \ltrls_j \models \oblr$ and $i \leq j$, and
\item $\not \exists \ltrls_k \in \trace | \ltrls_k \models \obld$ and $k < j$.
\end{itemize}
\item[violated] otherwise.
\end{description}
\item[maintenance]:
\begin{description}
\item[satisfied] $\forall \ltrls_i \in \trace | \ltrls_i \models \oblt$:
\begin{itemize}
\item $\exists \ltrls_j \in \trace | \ltrls_j \models \obld$ and $i \leq j$, and
\item $\forall k | i \leq k \leq j: \ltrls_k \models \oblr$.
\end{itemize}
\item[violated] otherwise.
\end{description}
\end{description}

\end{definition}

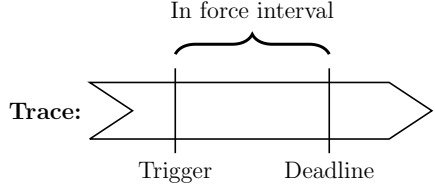
\begin{figure}[h!]
\centering
\scalebox{0.75}{\begin{tikzpicture}[thick, font=\large]

\linespread{0.75}
\newcommand{\minsize}{1.5cm}
\newcommand{\wi}{6cm}
\newcommand{\hi}{1cm}
\newcommand{\incline}{0.75cm}
\newcommand{\trigger}{0.25*\wi}
\newcommand{\deadline}{0.7*\wi}

\draw (0,0) -- ([xshift=-\incline]\wi,0) -- (\wi, 0.5*\hi) -- ([xshift=-\incline]\wi,\hi) -- (0,\hi) -- (\incline,0.5*\hi) -- (0,0);
\node[anchor=east] at (0,0.5*\hi) {\bf Trace:};

\draw ([yshift=0.25cm]\trigger,\hi) -- (\trigger,-0.25);
\node[anchor=north] at (\trigger,-0.25) {Trigger};

\draw ([yshift=0.25cm]\deadline,\hi) -- (\deadline,-0.25);
\node[anchor=north] at (\deadline,-0.25) {Deadline};

\draw [decorate,decoration={brace,amplitude=10pt},line width=1.5pt] ([yshift=0.5cm]\trigger,\hi) -- ([yshift=0.5cm]\deadline,\hi);

\node[yshift=1cm,anchor=south] at (0.5*\deadline+0.5*\trigger,\hi) {In force interval};

\end{tikzpicture}}
\caption{Conditional Obligation}\label{f:obl}
\end{figure}

Figure~\ref{f:obl} illustrates an in force interval of a local obligation within a trace. This simplistic representation shows that an in force interval is identified in a trace between two points: the \emph{trigger} and the \emph{deadline}. 

From Definition~\ref{def:obligations} it follows that states satisfying the deadline of an obligation are critical to determine whether the obligation is satisfied or violated. To avoid a scenario where an obligation being evaluated reaches the end of a trace without encountering a state satisfying the deadline (thus potentially leaving the obligation in an unknown state), we adopt Assumption~\ref{ass:final_deadline}. 

\begin{assumption}[Final Deadline]\label{ass:final_deadline}
Given a trace $\trace = \seq{\ltrls_1, \ldots, \ltrls_n}$ and an obligaton $\Obl^{o}\langle \oblr,$ $\oblt,$ $\obld\rangle$, we assume that $\ltrls_n \models \obld$ independently from the contents of $\ltrls_n$.
\end{assumption}

\subsubsection{Discussion on Multiple in Force Intervals}

A trace may contain multiple states satisfying the trigger and the deadline of an obligation, such as for instance shown in Figure~\ref{f:minfo}. 
In this example, two states satisfy the trigger, labelled \emph{trigger 1} and \emph{trigger 2}, and two states satisfying the deadline, labelled \emph{deadline 1} and \emph{deadline 2}. 
From Definition~\ref{def:obligations} it follows that each of such intervals needs to be satisfied for the trace to satisfy the obligation.

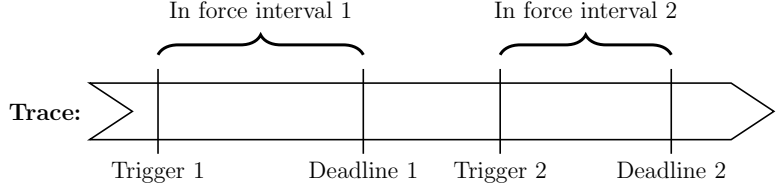
\begin{figure}
\centering
\scalebox{0.75}{\begin{tikzpicture}[thick, font=\large]

\linespread{0.75}
\newcommand{\minsize}{1.5cm}
\newcommand{\wi}{12cm}
\newcommand{\hi}{1cm}
\newcommand{\incline}{0.75cm}
\newcommand{\tOne}{0.1*\wi}
\newcommand{\dOne}{0.4*\wi}

\newcommand{\tTwo}{0.6*\wi}
\newcommand{\dTwo}{0.85*\wi}

\draw (0,0) -- ([xshift=-\incline]\wi,0) -- (\wi, 0.5*\hi) -- ([xshift=-\incline]\wi,\hi) -- (0,\hi) -- (\incline,0.5*\hi) -- cycle;
\node[anchor=east] at (0,0.5*\hi) {\bf Trace:};

\draw ([yshift=0.25cm]\tOne,\hi) -- (\tOne,-0.25);
\node[anchor=north] at (\tOne,-0.25) {Trigger 1};

\draw ([yshift=0.25cm]\dOne,\hi) -- (\dOne,-0.25);
\node[anchor=north] at (\dOne,-0.25) {Deadline 1};

\draw ([yshift=0.25cm]\tTwo,\hi) -- (\tTwo,-0.25);
\node[anchor=north] at (\tTwo,-0.25) {Trigger 2};

\draw ([yshift=0.25cm]\dTwo,\hi) -- (\dTwo,-0.25);
\node[anchor=north] at (\dTwo,-0.25) {Deadline 2};

\draw [decorate,decoration={brace,amplitude=10pt},line width=1.5pt] ([yshift=0.5cm]\tOne,\hi) -- ([yshift=0.5cm]\dOne,\hi);
\node[yshift=1cm,anchor=south] at (0.5*\dOne+0.5*\tOne,\hi) {In force interval 1};

\draw [decorate,decoration={brace,amplitude=10pt},line width=1.5pt] ([yshift=0.5cm]\tTwo,\hi) -- ([yshift=0.5cm]\dTwo,\hi);
\node[yshift=1cm,anchor=south] at (0.5*\dTwo+0.5*\tTwo,\hi) {In force interval 2};

\end{tikzpicture}}
\caption{Multiple In Force Instances}\label{f:minfo}
\end{figure}

When two in force instances intersect,the only possible intersection is shown in Figure~\ref{fig:st}. When this is the case, Lemma~\ref{lem:overlapping} states when two intervals intersect then both can be evaluated by evaluating a single interval.

\begin{figure}[h!]
\centering
\scalebox{0.75}{\begin{tikzpicture}[thick, font=\large]

\linespread{0.75}
\newcommand{\minsize}{1.5cm}
\newcommand{\wi}{8cm}
\newcommand{\hi}{1cm}
\newcommand{\incline}{0.75cm}
\newcommand{\tOne}{0.2*\wi}
\newcommand{\dOne}{0.8*\wi}

\newcommand{\tTwo}{0.4*\wi}
\draw (0,0) -- ([xshift=-\incline]\wi,0) -- (\wi, 0.5*\hi) -- ([xshift=-\incline]\wi,\hi) -- (0,\hi) -- (\incline,0.5*\hi) -- cycle;
\node[anchor=east] at (0,0.5*\hi) {\bf Trace:};

\draw ([yshift=0.25cm]\tOne,\hi) -- (\tOne,-0.25);
\node[anchor=north] at (\tOne,-0.25) {$t_1$};

\draw[line width=2pt] ([yshift=0.25cm]\dOne,\hi) -- (\dOne,-0.25);
\node[anchor=north west,align=center] at (\dOne,-0.25) {$d$};

\draw ([yshift=0.25cm]\tTwo,\hi) -- (\tTwo,-0.25);
\node[anchor=south west] at (\tTwo,0) {$t_2$};


\draw [decorate,decoration={brace,amplitude=10pt},line width=1.5pt] ([yshift=0.5cm]\tOne,\hi) -- ([yshift=0.5cm]\dOne,\hi);
\node[yshift=1cm,anchor=south] at (0.5*\dOne+0.5*\tOne,\hi) {$i_1$}; 

\draw [decorate,decoration={brace,amplitude=10pt},line width=1.5pt] ([yshift=0.5cm]\dOne,-\hi) --  ([yshift=0.5cm]\tTwo,-\hi);
\node[yshift=0cm,anchor=north] at (0.5*\dOne+0.5*\tTwo,-\hi) {$i_2$}; 

\end{tikzpicture}}
\caption{Successful Termination of 2 Overlapping In Force Instances}\label{fig:st}
\end{figure}
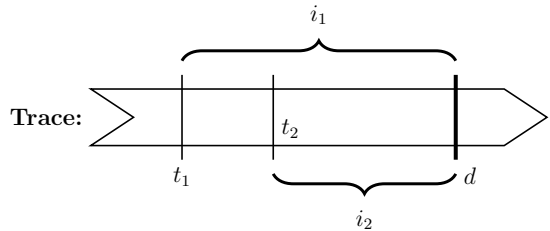

\begin{lemma}[Overlapping In Force Intervals]\label{lem:overlapping}
Consider a trace $\trace$ and a local obligation $\Obl^{o}\langle \oblr,$ $\oblt,$ $\obld\rangle$ having two in force intervals $i_1, i_2$ in $\trace$. If $i_1$ and $i_2$ overlap\footnote{Two in force intervals are considered to be overlapping if one of the intervals has its trigger point or deadline point in the other interval.}, where $i_2 \subset i_1$ (as shown in Figure~\ref{fig:st}), then:
\begin{description}
\item[o = achievement] evaluating $i_2$ provides the evaluation for both $i_1$ and $i_2$, and 
\item[o = maintenance] evaluating $i_1$ provides the evaluation for both $i_1$ and $i_2$. 
\end{description}



\end{lemma}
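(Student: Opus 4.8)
The plan is to prove the two cases by first pinning down the geometry of the overlap and then exploiting the \emph{opposite} monotonicities of the achievement and maintenance conditions with respect to interval containment. The core observation is that to evaluate the obligation over $\trace$ every in force interval must be satisfied, so the joint verdict on $i_1$ and $i_2$ is their conjunction; I would show that this conjunction collapses to a single interval in each case.

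First I would justify that the configuration in Figure~\ref{fig:st} is the \emph{only} way two in force intervals can overlap. Each in force interval runs from a trigger state to the first deadline state occurring at or after that trigger, which is exactly what the clauses of Definition~\ref{def:obligations} enforce through the conditions on $\obld$ relative to the witnessing state. Writing $t_1$ and $t_2$ for the two triggers with $t_1$ preceding $t_2$ in $\trace$, I would argue that if some state satisfying $\obld$ occurred strictly between $t_1$ and $t_2$, then $i_1$ would close before $t_2$ begins and the two intervals would be disjoint. Hence for them to overlap there can be no deadline strictly between $t_1$ and $t_2$, which forces the first deadline after $t_2$ to coincide with the first deadline after $t_1$, namely the common point $d$. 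Therefore $i_1 = [t_1, d]$ and $i_2 = [t_2, d]$ with $t_1$ preceding $t_2$, giving $i_2 \subset i_1$ as claimed.

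With the shared-deadline structure in hand, each half follows from a containment-plus-monotonicity argument. For the \textbf{achievement} case, satisfaction of $i_2$ yields some state $\ltrls_j$ with $t_2 \le j \le d$ and $\ltrls_j \models \oblr$; since $[t_2,d] \subseteq [t_1,d]$, that same $\ltrls_j$ witnesses satisfaction of $i_1$, so satisfying $i_2$ implies satisfying $i_1$. The conjunction therefore reduces to $i_2$, and evaluating $i_2$ alone decides both. For the \textbf{maintenance} case the implication runs the other way: satisfaction of $i_1$ forces $\oblr$ to hold at \emph{every} state of $[t_1,d]$, and by $[t_2,d] \subseteq [t_1,d]$ it then holds at every state of $[t_2,d]$ as well, so satisfying $i_1$ implies satisfying $i_2$, and the conjunction reduces to $i_1$. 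The main obstacle I anticipate is the first step rather than these monotonicity arguments: ruling out every other overlap pattern directly from the bare semantics, including the boundary subtleties where a trigger coincides with a deadline and the role of Assumption~\ref{ass:final_deadline} in guaranteeing that each interval actually closes, so that the nested configuration of Figure~\ref{fig:st} is genuinely exhaustive. Once that is secured, the existential flavour of achievement (making the smaller interval the stronger condition) and the universal flavour of maintenance (making the larger interval the stronger condition) deliver the result immediately.
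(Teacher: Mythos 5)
Your proof is correct and follows essentially the same route as the paper's: the containment $i_2 \subset i_1$ makes satisfaction of the inner interval imply satisfaction of the outer one for achievement (existential witness) and the reverse for maintenance (universal condition), so in each case the conjunction of the two verdicts collapses to a single interval. The paper's own proof consists of exactly these two monotonicity observations stated informally; your preliminary argument that overlap forces the shared-deadline nested configuration is added rigor for a fact the paper merely asserts in the text before the lemma (and which the hypothesis $i_2 \subset i_1$ already grants you).
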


\begin{proof}

We prove the lemma by checking the two cases depending on the type of the obligation:
%

\noindent\begin{tabular}{@{}l p{11.55cm}}
{\bf Achievement}  & From Definition \ref{def:obligations} we know that if $i_2$ is satisfied, then the state satisfying $i_2$ would also satisfy $i_1$, and if $i_2$ is not satisfied, then the evaluation of $i_1$ becomes irrelevant as the trace would not satisfy the obligation producing these two in force intervals.
\\

{\bf Maintenance} & From Definition \ref{def:obligations} we know that a required property needs to be maintained within an in force interval. When such property needs to be maintained for the duration of two in force intervals ($i_1$ and $i_2$), and these intervals are overlapping, this is equivalent to maintain the property for the first interval, as the deadline of the first interval would also terminate the second interval. \\
\end{tabular}

\end{proof}

\subsection{Compliance Classifications}

A process checked for compliance can be classified as follows: \emph{fully compliant} if every trace of the process is compliant with the obligations composing the regulatory framework, \emph{partially compliant} if there exists at least one compliant trace, and \emph{not compliant} if none of the traces comply with the obligations.

\begin{definition}[Process Compliance]$\:$\label{def:pcompliance}
Given a process $(P, \kwd{ann})$ and a regulatory framework composed by a set of obligations $\Obls$, the compliance of $(P, \kwd{ann})$ with respect to $\Obls$ is determined as follows:
  \begin{itemize}
 \item \textbf{Full compliance} $\PCompliance[F]{(P, \kwd{ann})}{\Obls}$ if and only if\\ $\forall \trace \in \Theta(P, \kwd{ann}), \PCompliance{\trace}{\Obls}$.
 \item\textbf{Partial compliance} $\PCompliance[P]{(P, \kwd{ann})}{\Obls}$ if and only if\\ $\exists \trace \in \Theta(P, \kwd{ann}) | \PCompliance{\trace}{\Obls}$.
  \item\textbf{Non compliance} $\NPCompliance{(P, \kwd{ann})}{\Obls}$ if and only if\\ ${\neg \exists} \trace \in \Theta(P, \kwd{ann}) | \PCompliance{\trace}{\Obls}$.
  \end{itemize}
\end{definition}

\subsection{Problem Variants}\label{sec:variant_properties}

Given the components constituting a compliance problem introduced in the previous sections, we introduce three binary properties of the regulatory framework that allow to distinguish eight different variants of the compliance problem.

\bigskip
\noindent\begin{tabular}{@{}l p{13.15cm}}
{\bf 1/n} & A regulatory framework containing a single obligation ({\bf 1}) \textbf{or} a set of obligations ({\bf n}).\\
{\bf G/L} & A regulatory framework limited to contain \emph{global obligations}. Global obligations are a special case where their \emph{in-force} instance is set by default to the full trace duration, and we represent tham by omitting the trigger and deadline as follows: $\Obl^{o}\langle \oblr, \_, \_\rangle$, \textbf{or} \emph{conditional obligations}\footnote{Conditional obligations are sometimes referred to as local obligations as the counterpart to global obligations.} are allowed ({\bf L}).\\
{\bf -/+} & A regulatory framework containing obligations limited to be described using propositional literals ({\bf -}) \textbf{or} allowing such obligations to be described using propositional formulae ({\bf +}). \\
\end{tabular}
\bigskip


Each of the binary property allows an easy and a hard option:

\begin{itemize}
\item It is easier to check compliance against a single obligation ({\bf 1}) than multiple ones ({\bf n}).
\item A global obligation ({\bf G}) allows a single in force instance spanning for the entirety of the traces, making it the easier counterpart, as conditional obligations ({\bf L}) may allow multiple in force instances, which also need to be computed in each possible trace of the process model.
\item Obligations limiting their elements to propositions ({\bf -}) instead of formulae ({\bf +}) are easier to verify, as the satisfaction of their elements can be reduced to checking the execution of tasks having such elements in their annotations. Checking the satisfaction of the elements of an obligation allowing formulae always requires to consider the trace's history, making it harder.
\end{itemize}

Considering two variants of the problem sharing two of the three binary properties, it is fair to assume that the one being defined by the harder differentiating property, is at least as computationally complex as the one defined by the easier differentiating property. For instance, the variant \textbf{nG+} is at least as difficult as both \textbf{nG-} or \textbf{1G+}. This is illustrated graphically in Figure~\ref{fig:empty_lattice} below, where the potential increase in complexity is indicated by the direction of the arrows.

\begin{figure}[h!]
\centering
\scalebox{0.65}{\begin{tikzpicture}[thick,font=\LARGE]

\newcommand{\hspacing}{5cm};
\newcommand{\vspacing}{5cm};
\newcommand{\dhspacing}{2.25cm}
\newcommand{\dvspacing}{2.25cm};
\newcommand{\tw}{1.5cm}

\node (1gminus) at (0,0) {$1G-$};
\node[xshift=\hspacing,text width=\tw,align=center] (1lminus) at (1gminus.east) {$1L-$};

\node[yshift=\vspacing,text width=\tw,align=center] (ngminus) at (1gminus.north) {$nG-$};
\node[yshift=\vspacing,text width=\tw,align=center] (nlminus) at (1lminus.north) {$nL-$};

\node[xshift=\dhspacing,yshift=\dvspacing,text width=\tw,align=center] (1gplus) at (1gminus.north east) {$1G+$};
\node[xshift=\dhspacing,yshift=\dvspacing,text width=\tw,align=center] (1lplus) at (1lminus.north east) {$1L+$};

\node[yshift=\vspacing,text width=\tw,align=center] (ngplus) at (1gplus) {$nG+$};
\node[yshift=\vspacing,text width=\tw,align=center] (nlplus) at (1lplus) {$nL+$};

%
%

\draw[->,-{Latex[length=3mm]}] (1gminus) -- (1lminus);

\draw[->,-{Latex[length=3mm]}] (1gminus) -- (ngminus);
\draw[->,-{Latex[length=3mm]}] (1lminus) -- (nlminus);

\draw[->,-{Latex[length=3mm]}] (1gminus) -- (1gplus);
\draw[->,-{Latex[length=3mm]}] (1lminus) -- (1lplus);

\draw[->,-{Latex[length=3mm]}] (ngminus) -- (nlminus);
\draw[->,-{Latex[length=3mm]}] (ngplus) -- (nlplus);

\draw[->,-{Latex[length=3mm]}] (ngminus) -- (ngplus);
\draw[->,-{Latex[length=3mm]}] (nlminus) -- (nlplus);

\draw[->,-{Latex[length=3mm]}] (1gplus) -- (ngplus);
\draw[->,-{Latex[length=3mm]}] (1lplus) -- (nlplus);

\draw[->,-{Latex[length=3mm]}] (1gplus) -- (1lplus);

\end{tikzpicture}}
\caption{Compliance Complexity Lattice.}
\label{fig:empty_lattice}
\end{figure}
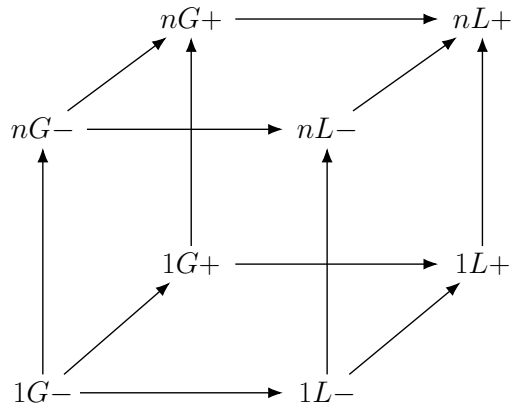

\section{Full Compliance: 1G+ is coNP-Complete}\label{sec:1g+}

The \emph{tautology problem} consists of proving whether a given propositional logic formula $\varphi$ is true in all of its possible interpretations, and its computational complexity is known to be co\textbf{NP}-complete.
In this section, we show that proving full compliance of the variant \textbf{1G+} is co\textbf{NP}-complete, by reducing the tautology problem to \textbf{1G+}.
As such, we require a WF-net representating a set of traces covering each interpretation of $\varphi$:

\begin{definition}[Interpretation WF-net construction]\label{def:ipc}
Given a propositional logic formula $\varphi$, a WF-net $\petri$, containing a set of traces whose aggregated collection of process states represent each possible interpretation of $\varphi$, is constructed as follows:


\begin{enumerate}
\item For each proposition appearing in $\varphi$ a \emph{bond} component representing an \kwd{XOR} block is created containing the following:
\begin{itemize}
\item A \emph{trivial} component with a transition having the proposition annotated.
\item A \emph{trivial} component with a transition having the \emph{negation} of the proposition annotated.
\end{itemize}
\item Construct a \emph{polygon} component representing a \kwd{SEQ} block starting with the \kwd{start} and concluded by the \kwd{end}, having between these places the following components included:
\begin{enumerate}
\item A \emph{trivial} component composed by a transition labeled \emph{init} and annotated with the set of proposition containing all the positive variants of the propositions appearing in $\varphi$, followed by: 
\item All \kwd{XOR} blocks constructed in 1. The order in which these \kwd{XOR} blocks are included is not relevant.
\end{enumerate}
\end{enumerate}

\end{definition}

Given a propositional formula $\varphi$, proving whether $\varphi$ is a \emph{tautology} can be reduced to a \textbf{1G+} problem as follows:

\begin{reduction}
\mbox{}
%
%

\bigskip
\noindent\begin{tabular}{@{}p{1.65cm} p{12.4cm}}
{\bf WF-Net} & The model is constructed as an \emph{Interpretation Process} as described in Definition~ \ref{def:ipc}. 
This construction allows to represent each possible interpretation of $\varphi$ in the traces of the model.
A generic result of the construction process detailed above is shown in Figure~\ref{fig:ipc}.
When one of the \kwd{XOR} blocks constructed in step 1 is executed, one of the transitions composing it is executed and either the positive or negative variant of a proposition is introduced in the process state. When all of the \kwd{XOR} blocks are executed, the resulting process state contains a positive or negative variant of each proposition appearing in $\varphi$, corresponding to one of the possible interpretations of the formula, hence the order in which the \kwd{XOR} are included in the \kwd{SEQ} is not relevant. The inclusion of the \emph{init} transition allows to always have in the execution's state of the WF-Net a complete interpretation for the variables of $\varphi$.\\
\end{tabular}

\noindent\begin{tabular}{@{}p{1.65cm} p{12.4cm}}
{\bf RF} & The regulatory framework is composed by the following global maintenance obligation: $\Obl^{m}\langle \varphi, \_, \_\rangle$. Note that at any step of the execution, the values of the variables composing the process state always represents a possible interpretation for the formula $\varphi$, hence by representing the obligation as a \emph{maintenance} we ask that every possible interpretation makes the formula true.\\
\end{tabular}

\end{reduction}

\begin{figure}[h!]
\centering
\scalebox{0.6}{\begin{tikzpicture}[thick,font=\large]

\node[draw,align=center,circle,minimum size=1cm] (p1) at (0,0) {};

\node[xshift=1cm,draw,align=center,minimum size=1cm,anchor=south west] (t1) at (p1.north east) {};
\node[align=center] at (t1.center) {$l_{1}$};
\node[xshift=1cm,draw,align=center,minimum size=1cm,anchor=north west] (t2) at (p1.south east) {};
\node[align=center] at (t2.center) {$\neg l_{1}$};

\node[xshift=1cm,draw,align=center,circle,minimum size=1cm,anchor=west] (p2) at (t1.east |- p1) {};

\node[xshift=1cm,draw,align=center,minimum size=1cm,anchor=south west] (t3) at (p2.north east) {};
\node[align=center] at (t3.center) {$l_{2}$};
\node[xshift=1cm,draw,align=center,minimum size=1cm,anchor=north west] (t4) at (p2.south east) {};
\node[align=center] at (t4.center) {$\neg l_{2}$};

\node[xshift=1cm,draw,align=center,circle,minimum size=1cm,anchor=west] (p3) at (t3.east |- p1) {};

\node[xshift=1cm,anchor=west] (dots) at (p3.east) {\bf \ldots};

\node[xshift=1cm,draw,align=center,circle,minimum size=1cm,anchor=west] (p4) at (dots.east) {};

\node[xshift=1cm,draw,align=center,minimum size=1cm,anchor=south west] (t5) at (p4.north east) {};
\node[align=center] at (t5.center) {$l_{n-1}$};
\node[xshift=1cm,draw,align=center,minimum size=1cm,anchor=north west] (t6) at (p4.south east) {};
\node[align=center] at (t6.center) {$\neg l_{n-1}$};

\node[xshift=1cm,draw,align=center,circle,minimum size=1cm,anchor=west] (p5) at (t5.east |- p1) {};

\node[xshift=1cm,draw,align=center,minimum size=1cm,anchor=south west] (t7) at (p5.north east) {};
\node[align=center] at (t7.center) {$l_{n}$};
\node[xshift=1cm,draw,align=center,minimum size=1cm,anchor=north west] (t8) at (p5.south east) {};
\node[align=center] at (t8.center) {$\neg l_{n}$};

\node[xshift=1cm,draw,align=center,circle,minimum size=1cm,anchor=west] (p6) at (t7.east |- p1) {};

\draw[-{Latex[length=3mm]}] (p1) -- (t1);
\draw[-{Latex[length=3mm]}] (p1) -- (t2);
\draw[-{Latex[length=3mm]}] (t1) -- (p2);
\draw[-{Latex[length=3mm]}] (t2) -- (p2);

\draw[-{Latex[length=3mm]}] (p2) -- (t3);
\draw[-{Latex[length=3mm]}] (p2) -- (t4);
\draw[-{Latex[length=3mm]}] (t3) -- (p3);
\draw[-{Latex[length=3mm]}] (t4) -- (p3);

\draw[dashed,-{Latex[length=3mm]}] (p3) -- (dots);
\draw[dashed,-{Latex[length=3mm]}] (dots) -- (p4);

\draw[-{Latex[length=3mm]}] (p4) -- (t5);
\draw[-{Latex[length=3mm]}] (p4) -- (t6);
\draw[-{Latex[length=3mm]}] (t5) -- (p5);
\draw[-{Latex[length=3mm]}] (t6) -- (p5);

\draw[-{Latex[length=3mm]}] (p5) -- (t7);
\draw[-{Latex[length=3mm]}] (p5) -- (t8);
\draw[-{Latex[length=3mm]}] (t7) -- (p6);
\draw[-{Latex[length=3mm]}] (t8) -- (p6);

\node[anchor=south] at (p1.north) {$\{l_1 \ldots l_n\}$};

\end{tikzpicture}}
  \caption{Interpretation WF-Net Construction.}\label{fig:ipc}
\end{figure}
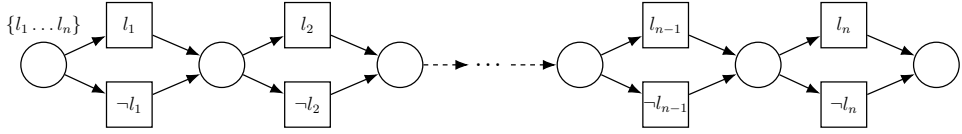

\begin{proof} 

\hspace{0pt}

\begin{enumerate}
\item From the construction of the WF-net in Definition \ref{def:ipc}, it follows that each possible trace contains in its last state a possible interpretation for the proposition composing $\varphi$.
\item From the construction of the WF-net in Definition \ref{def:ipc}, it follows that every possible interpretation of the literals included in $\varphi$ is considered in at least a state of a trace of the business process model.
\item Given the initial annotation of the task \kwd{start}, from Definition \ref{def:litupd} and the construction of the WF-net in Definition~\ref{def:ipc} it follows that the states evolving through the trace of the model represent the interpretations of the propositions composing $\varphi$ that are still possible.
\item From 1., 2., 3., and the construction of the obligation, it follows that if none of the states in every trace of the business process model falsifies the condition of the maintenance obligation constituting the regulatory framework, this means that $\varphi$ satisfies each of its possible interpretations. Therefore, $\varphi$ is a tautology.
\end{enumerate}

\vspace{0em}
\end{proof}

\section{Computational Complexity Overview}\label{sec:overview}

In this section, we show how the new complexity result integrates with existing results and how they allow to map the computational complexity of each variant of the problem for proving \emph{full}, and \emph{non} compliance.

\subsection{Existing Complexity Results for Full Compliance}


For the problem of proving full compliance, the computational complexity results already exist for five out of eight variants of the problem, as shown in Table~\ref{tab:full_complexities}. 

\begin{table}[h!]
\centering
\begin{tabular}{|c|p{8.25cm}|l|}
\hline
{\bf Variant} & {\bf Source} & {\bf Complexity Class} \\ \hline
\textbf{1G-} & Colombo Tosatto et al.~\cite{DBLP:journals/fcsc/TosattoK0KGT15} & \textbf{P} \\ \hline
\textbf{1L-} & Colombo Tosatto et al.~\cite{patternscompliance} &  \textbf{P} \\ \hline
\textbf{nL-} & Colombo Tosatto et al.~\cite{patternscompliance} &  \textbf{P} \\ \hline
\textbf{1L+} & Colombo Tosatto et al.~\cite{DBLP:journals/tsc/TosattoGK15} & co\textbf{NP}-complete \\ \hline
\textbf{nL+} & Colombo Tosatto~\cite{phd_silvano} & co\textbf{NP}-complete \\ \hline

\end{tabular}
\caption{Full Compliance Complexity}\label{tab:full_complexities}
\end{table}

\subsubsection{Complexity Overview}

The computational complexity results of each variant for proving full compliance are summarised in Figure \ref{fig:full_complexity}.

\begin{figure}[h!]
\centering
\scalebox{0.65}{\begin{tikzpicture}[thick,font=\LARGE]

\newcommand{\hspacing}{5cm};
\newcommand{\vspacing}{5cm};
\newcommand{\dhspacing}{2.25cm}
\newcommand{\dvspacing}{2.25cm};
\newcommand{\tw}{1.5cm}

\node (1gminus) at (0,0) {$1G-$};
\node[xshift=\hspacing,text width=\tw,align=center] (1lminus) at (1gminus.east) {$1L-$};

\node[yshift=\vspacing,text width=\tw,align=center] (ngminus) at (1gminus.north) {$nG-$};
\node[yshift=\vspacing,text width=\tw,align=center] (nlminus) at (1lminus.north) {$nL-$};

\node[xshift=\dhspacing,yshift=\dvspacing,text width=\tw,align=center] (1gplus) at (1gminus.north east) {$1G+$};
\node[xshift=\dhspacing,yshift=\dvspacing,text width=\tw,align=center] (1lplus) at (1lminus.north east) {$1L+$};

\node[yshift=\vspacing,text width=\tw,align=center] (ngplus) at (1gplus) {$nG+$};
\node[yshift=\vspacing,text width=\tw,align=center] (nlplus) at (1lplus) {$nL+$};

\node[anchor=east] at (1gminus.west) {\bf P};
\node[anchor=east] at (ngminus.west) {\bf P};
\node[anchor=east] at (ngplus.west) {co\textbf{NP}-c};
\node[anchor=east] at (1gplus.west) {co\textbf{NP}-c};

\node[anchor=west] at (1lminus.east) {\bf P};
\node[anchor=west] at (1lplus.east) {co\textbf{NP}-c};
\node[anchor=west] at (nlplus.east) {co\textbf{NP}-c};

\node[anchor=west] at (nlminus.east) {\bf P};

\draw[->,-{Latex[length=3mm]}] (1gminus) -- (1lminus);

\draw[->,-{Latex[length=3mm]}] (1gminus) -- (ngminus);
\draw[->,-{Latex[length=3mm]}] (1lminus) -- (nlminus);

\draw[->,-{Latex[length=3mm]}] (1gminus) -- (1gplus);
\draw[->,-{Latex[length=3mm]}] (1lminus) -- (1lplus);

\draw[->,-{Latex[length=3mm]}] (ngminus) -- (nlminus);
\draw[->,-{Latex[length=3mm]}] (ngplus) -- (nlplus);

\draw[->,-{Latex[length=3mm]}] (ngminus) -- (ngplus);
\draw[->,-{Latex[length=3mm]}] (nlminus) -- (nlplus);

\draw[->,-{Latex[length=3mm]}] (1gplus) -- (ngplus);
\draw[->,-{Latex[length=3mm]}] (1lplus) -- (nlplus);

\draw[->,-{Latex[length=3mm]}] (1gplus) -- (1lplus);

\end{tikzpicture}}
\caption{Full Compliance Complexity Lattice.}
\label{fig:full_complexity}
\end{figure}
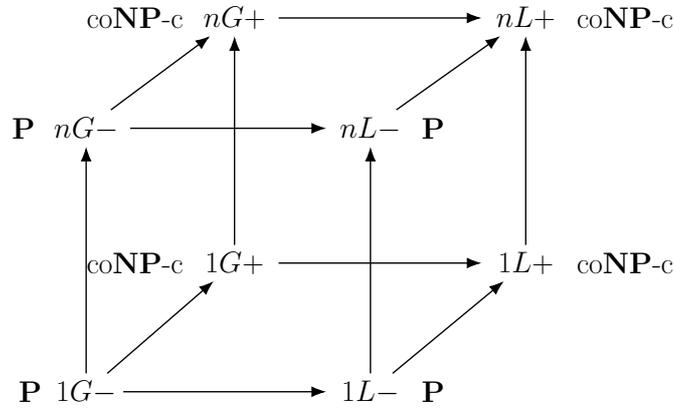

Note that Figure~\ref{fig:full_complexity} also contains results for the variants \textbf{nG-} and \textbf{nG+}, which were not explicitly provided. These results are derived using the relations between the variants of the problem shown earlier in Figure \ref{fig:empty_lattice}. 
The following relationship holds regarding the computational complexity: \textbf{1G-} $\leq$ \textbf{nG-} $\leq$ \textbf{nL-}. Therefore, by knowing that both \textbf{1G-} and \textbf{nL-} are in \textbf{P}, it follows that \textbf{nG-} is in \textbf{P} as well. Similarly, we know that with respect to computational complexity it holds that \textbf{1G+} $\leq$ \textbf{nG+} $\leq$ \textbf{nL+}. As the extremes belong to the same computational complexity class, then also \textbf{nG+} is in co\textbf{NP}-complete (shortened to co\textbf{NP}-c in Figure~\ref{fig:full_complexity}).

When proving full compliance, the computational complexity results show that variants limiting the expressivity of the elements of the obligations to atomic propositions can be solved in polynomial time, while when this limitation is absent, the complexity of solving the problem is in \textbf{NP}.

\subsection{Relations with Proving Non Compliance}

Having provided the computational complexity results for the variants of the problem of proving both full and partial compliance of process models, we can extrapolate the computational complexities of proving non compliance of the same variants. Proving non compliance of a process model is the complement of proving its partial compliance, as it requires to verify that none of the traces comply with the given regulations. Figure \ref{fig:compliance_relations} illustrates this graphically by showing that the set of partially compliant processes are complementary to the non compliant ones. Moreover the illustration also shows that fully compliant process models are also partially compliant, as they always contain at least a compliant trace.\footnote{Considering how we define process models, it cannot be the case that a model contains no traces, hence for this particular problem we can say that a \emph{for all} implies an \emph{exists}, which is generally not always the case in classical logic.}

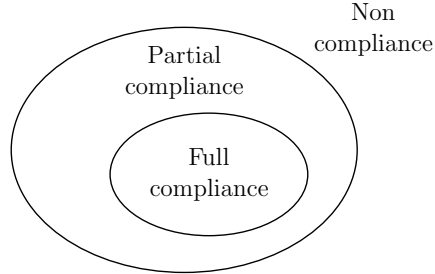
\begin{figure}[h!]
\centering
\scalebox{0.65}{\begin{tikzpicture}[thick,font=\Large]

\linespread{1}

\node[draw,ellipse,minimum width=7cm,minimum height=5cm] (partial) at (0,0) {};
\node[yshift=-0.25cm,anchor=north,align=center] at (partial.north) {Partial\\ compliance};

\node[xshift=0.5cm,yshift=-0.5cm,draw,ellipse,align=center,minimum width=4cm,minimum height=2.5cm] (full) at (0,0) {Full\\ compliance};

\node[anchor=south west,align=center] (non) at (partial.north east) {Non\\ compliance};

\end{tikzpicture}}
\caption{Compliance relations.}
\label{fig:compliance_relations}
\end{figure}
\section{Summary}\label{sec:summary}

This paper introduces a new complexity for one of the variants of the problem of proving regulatory compliance of models. When considered together with the existing computational complexity results, we can completely map the computational complexity for all these variants when the problem is to verify either full or non compliance of a model.

When the problem is to evaluate whether is partially compliant with the regulatory requirements, computational complexity results exist for 7 of the 8 variants, as reported in 
Table~\ref{tab:partial_complexities}, and the computational complexity for the variant \textbf{1L-} is still not known, as summarised in Figure~\ref{fig:partial_complexity}.

\begin{table}[h!]
\centering
\begin{tabular}{|c|p{8.25cm}|l|}
\hline
\bf Variant & \bf Source & \bf Complexity Class \\ \hline
\textbf{1G-} & Colombo Tosatto et al.~\cite{DBLP:journals/fcsc/TosattoK0KGT15} & \textbf{P} \\ \hline
\textbf{nG-} & Colombo Tosatto et al.~\cite{BPM2019_complexity} & \textbf{NP}-complete \\ \hline
\textbf{1G+} & Colombo Tosatto et al.~\cite{BPM2019_complexity} & \textbf{NP}-complete \\ \hline
\textbf{nG+} & Colombo Tosatto et al.~\cite{BPM2019_complexity} & \textbf{NP}-complete \\ \hline
\textbf{nL-} & Colombo Tosatto et al.~\cite{DBLP:journals/tsc/TosattoGK15} &  \textbf{NP}-complete \\ \hline
\textbf{1L+} & Colombo Tosatto et al.~\cite{BPM2019_complexity} & \textbf{NP}-complete \\ \hline
\textbf{nL+} & Colombo Tosatto et al.~\cite{DBLP:journals/tsc/TosattoGK15} & \textbf{NP}-complete \\ \hline
\end{tabular}
\caption{Partial Compliance Complexity}\label{tab:partial_complexities}
\end{table}

\begin{figure}[h!]
\centering
\scalebox{0.65}{\begin{tikzpicture}[thick,font=\LARGE]

\newcommand{\hspacing}{5cm};
\newcommand{\vspacing}{5cm};
\newcommand{\dhspacing}{2.25cm}
\newcommand{\dvspacing}{2.25cm};
\newcommand{\tw}{1.5cm}

\node (1gminus) at (0,0) {$1G-$};
\node[xshift=\hspacing,text width=\tw,align=center] (1lminus) at (1gminus.east) {$1L-$};

\node[yshift=\vspacing,text width=\tw,align=center] (ngminus) at (1gminus.north) {$nG-$};
\node[yshift=\vspacing,text width=\tw,align=center] (nlminus) at (1lminus.north) {$nL-$};

\node[xshift=\dhspacing,yshift=\dvspacing,text width=\tw,align=center] (1gplus) at (1gminus.north east) {$1G+$};
\node[xshift=\dhspacing,yshift=\dvspacing,text width=\tw,align=center] (1lplus) at (1lminus.north east) {$1L+$};

\node[yshift=\vspacing,text width=\tw,align=center] (ngplus) at (1gplus) {$nG+$};
\node[yshift=\vspacing,text width=\tw,align=center] (nlplus) at (1lplus) {$nL+$};

\node[anchor=east] at (1gminus.west) {\bf P};
\node[anchor=east] at (ngminus.west) {\textbf{NP}-c};
\node[anchor=east] at (ngplus.west) {\textbf{NP}-c};
\node[anchor=east] at (1gplus.west) {\textbf{NP}-c};

\node[anchor=west] at (1lminus.east) {\bf ?};
\node[anchor=west] at (1lplus.east) {\textbf{NP}-c};
\node[anchor=west] at (nlplus.east) {\textbf{NP}-c};

\node[anchor=west] at (nlminus.east) {\textbf{NP}-c};

\draw[->,-{Latex[length=3mm]}] (1gminus) -- (1lminus);

\draw[->,-{Latex[length=3mm]}] (1gminus) -- (ngminus);
\draw[->,-{Latex[length=3mm]}] (1lminus) -- (nlminus);

\draw[->,-{Latex[length=3mm]}] (1gminus) -- (1gplus);
\draw[->,-{Latex[length=3mm]}] (1lminus) -- (1lplus);

\draw[->,-{Latex[length=3mm]}] (ngminus) -- (nlminus);
\draw[->,-{Latex[length=3mm]}] (ngplus) -- (nlplus);

\draw[->,-{Latex[length=3mm]}] (ngminus) -- (ngplus);
\draw[->,-{Latex[length=3mm]}] (nlminus) -- (nlplus);

\draw[->,-{Latex[length=3mm]}] (1gplus) -- (ngplus);
\draw[->,-{Latex[length=3mm]}] (1lplus) -- (nlplus);

\draw[->,-{Latex[length=3mm]}] (1gplus) -- (1lplus);

\end{tikzpicture}}
\caption{Partial Compliance Complexity Lattice.}
\label{fig:partial_complexity}
\end{figure}


\vskip 0.2in
\bibliography{complexity}
\bibliographystyle{splncs04}

\end{document}